\documentclass[12pt]{article}
\usepackage{amsmath,amssymb,amscd,amsthm,color,graphics,verbatim}
\usepackage[margin=0.95in]{geometry}

\usepackage[enableskew]{youngtab}
\def\qed{\hfill $\vrule height 2.5mm  width 2.5mm depth 0mm $}

\newtheorem{theorem}{Theorem}
\newtheorem{proposition}[theorem]{Proposition}
\newtheorem{lemma}[theorem]{Lemma}

\newtheorem{conjecture}[theorem]{Conjecture}

\theoremstyle{definition}

\newtheorem{definition}[theorem]{Definition}
\newtheorem{remark}[theorem]{Remark}

\begin{document}
$\,$\vspace{0mm}

\begin{center}
{\sf\huge Some Remarks On Nepomechie--Wang}
\vspace{3mm}\\
{\sf\huge Eigenstates For Spin 1/2\, XXX Model}
\vspace{13mm}\\
{\textsf{\LARGE  ${}^{\mbox{\small a,b}}$Anatol N. Kirillov and
${}^{\mbox{\small c}}$Reiho Sakamoto}}
\vspace{15mm}\\
{\textsf {${}^{\mbox{\small{a}}}$Research Institute for Mathematical Sciences,}}
\vspace{-1mm}\\
{\textsf {Kyoto University, Sakyo-ku,}}
\vspace{-1mm}\\
{\textsf {Kyoto, 606-8502, Japan}}
\vspace{-1mm}\\
{\textsf {kirillov@kurims.kyoto-u.ac.jp}}
\vspace{4mm}\\
{\textsf {${}^{\mbox{\small{b}}}$The Kavli Institute for the Physics and}}
\vspace{-1mm}\\
{\textsf {Mathematics of the Universe (IPMU),}}
\vspace{-1mm}\\
{\textsf {The University of Tokyo,}}
\vspace{-1mm}\\
{\textsf {Kashiwa, Chiba, 277-8583, Japan}}
\vspace{4mm}\\
{\textsf {${}^{\mbox{\small{c}}}$Department of Physics,}}
\vspace{-1mm}\\
{\textsf {Tokyo University of Science, Kagurazaka,}}
\vspace{-1mm}\\
{\textsf {Shinjuku, Tokyo, 162-8601, Japan}}
\vspace{-1mm}\\
{\textsf {reiho@rs.tus.ac.jp}}
\vspace{-1mm}\\
{\textsf {reihosan@08.alumni.u-tokyo.ac.jp}}
\vspace{20mm}\\
{\textit{\large To Professor Boris Feigin on the occasion}}\\
{\textit{\large of his sixtieth anniversary}}
\vspace{15mm}\\
\end{center}

\begin{abstract}
\noindent
We compute the energy eigenvalues of Nepomechie--Wang's eigenstates
for the spin 1/2 isotropic Heisenberg chain.\medskip\\
Keywords: Bethe ansatz equations, Heisenberg model, Rigged configurations.\\
MSC: 81R12, 16T25, 17B80.
\end{abstract}

\pagebreak

\section{Introduction}
The Bethe ansatz \cite{Bethe} allows us to construct eigenvectors for Hamiltonians
of a wide range of integrable systems.
In our paper we are basically interested in the spin $\frac{1}{2}$ isotropic
Heisenberg chain (also known as the XXX model) with the periodic boundary condition which is the subject
of the original Bethe's paper.
According to the algebraic Bethe ansatz \cite{FT} (see also the book \cite{KorepinBook}),
the essence of the construction can be thought in the following way.
We start from a family of mutually commuting operators
$\{B_N(\lambda)\}_{\lambda\in\mathbb{C}}$, $[B_N(\lambda),B_N(\mu)]=0$,
and the ground state vector $|0\rangle_N$ where $N$ is the length of the chain.
Suppose that a collection of mutually distinct complex numbers
$\lambda=(\lambda_1,\lambda_2,\ldots,\lambda_\ell)$ satisfies the following
system of algebraic equations, commonly known as the Bethe ansatz equations,
\begin{align}
\label{eq:Bethe_ansatz}
\left(
\frac{\lambda_k+\frac{i}{2}}{\lambda_k-\frac{i}{2}}
\right)^N
=\prod_{j=1 \atop j\neq k}^\ell
\frac{\lambda_k-\lambda_j+i}{\lambda_k-\lambda_j-i},
\qquad
(k=1,\cdots,\ell),
\end{align}
then the vector
\begin{align}
\Psi_N(\lambda_1,\ldots,\lambda_\ell)=\Psi_{\lambda,N}
=B_N(\lambda_1)\cdots B_N(\lambda_\ell)|0\rangle_N
\end{align}
is an eigenvector of the spin $\frac{1}{2}$ isotropic Heisenberg chain
if the vector is non-zero.

Recall that a solution $\lambda=(\lambda_1,\lambda_2,\ldots,\lambda_\ell)$
to the equation (\ref{eq:Bethe_ansatz}) is called regular
if the corresponding vector is non-zero $\Psi_{\lambda,N}\neq 0$.
It had been observed by H. Bethe that the number of regular solutions
to the system (\ref{eq:Bethe_ansatz}) is strictly smaller than the number
of eigenvectors of the spin $\frac{1}{2}$ Heisenberg chain Hamiltonian
even for $N=4$ and $\ell=2$ case.
The problem to construct ``missing" eigenstates has been investigated
by many authors, and partially solved by \cite[Eq.(26)]{EKS}.
The most natural way to characterize and construct ``missing" eigenstates
has been developed by Nepomechie--Wang \cite{NW}.
Recall that for $N=4$ and $\ell=2$, a ``missing solution" corresponds
to solutions $(\lambda_1,\lambda_2)=(\frac{i}{2},-\frac{i}{2})$
in which case we have $B_4(\frac{i}{2})B_4(-\frac{i}{2})|0\rangle_4=0$.
The similar phenomena appears for general singular solutions to
the Bethe ansatz equations of the form
\begin{align}\label{eq:sing}
\lambda=\left\{\frac{i}{2},-\frac{i}{2},
\lambda_3,\ldots,\lambda_\ell\right\}.
\end{align}
The problem treated and partially solved in \cite{NW}
is to find a selection rule which guarantee that we can achieve
$\Psi_{\lambda,N}\neq 0$ under certain regularization.
Singular solutions of the form (\ref{eq:sing}) such that one can make
$\Psi_{\lambda,N}\neq 0$ is called physical singular solutions.
For $N\leq 14$, Nepomechie--Wang's rule is confirmed by
an extensive numerical computation \cite{HNS1}.
Also, the paper \cite{KS} reveals that the set of solutions which
satisfy Nepomechie--Wang's rule has a deep mathematical
structure called the rigged configurations (see Section \ref{sec:rc}).
The main purpose of the present paper is to give an explicit formula for
the energy eigenvalues of the Bethe vectors constructed from
the physical singular solutions (Theorem \ref{th:main}).
We also provide an alternative proof of results of \cite{NW}
at Proposition \ref{prop:NW}.

\section{Bethe vectors and Bethe ansatz equations}
To start with let us recall that the Bethe ansatz method is
a device to produce eigenvectors of an integrable system in question.
In the present paper we apply the Bethe ansatz method to the spin $\frac{1}{2}$
isotropic Heisenberg model under the periodic boundary condition.
The space of states $\mathfrak{H}_N$ of our model is
\begin{align}
\mathfrak{H}_N &= \bigotimes_{j=1}^{N} V_j,\quad V_j \simeq {\mathbb {C}}^2.
\end{align}
Then the Hamiltonian $\mathcal{H}_N$ is
\begin{align}
\label{def:Heisenberg_1/2}
\mathcal{H}_N &=  \frac{J}{4} \sum_{k=1}^{N}( \sigma_{k}^{1} \sigma_{k+1}^{1}
+ \sigma_{k}^{2} \sigma_{k+1}^{2}+ \sigma_{k}^{3} \sigma_{k+1}^{3} -{\mathbb {I}}_N),\qquad
\sigma_{N+1}^{a}=\sigma_{1}^{a},
\end{align}
where $\sigma^{a}$ $(a=1,2,3)$ are the Pauli matrices
\begin{align}
\sigma^{1} = 
\left(\!
\begin{array}{cc}
0&1\\
1&0
\end{array}
\!\right),\qquad
\sigma^{2} = 
\left(\!
\begin{array}{cc}
0&-i\\
i&0
\end{array}
\!\right),\qquad
\sigma^{3} = 
\left(\!
\begin{array}{cc}
1&0\\
0&-1
\end{array}
\!\right),
\end{align}
and the operators $\sigma_k^{a}$ $(a=1,2,3)$ act on $\mathfrak{H}_N$ as
\begin{align}
\sigma_{k}^{a}= I \otimes \cdots \otimes
\underbrace{\sigma^{a}}_{k}
\otimes \cdots \otimes I,
\end{align}
that is, they act non trivially only on the space $V_k$.
Here $I$ is the $2\times 2$ identity matrix and $\mathbb{I}_N$
is the identity matrix on the space of states; $\mathbb{I}_N=I^{\otimes N}$.

Let us consider the $L$-operators
\begin{align}
\label{eq:L_as_sum_of_sigmas}
L_k(\lambda)=\lambda I\otimes\mathbb{I}_N+
\frac{i}{2}\sum_a^3\sigma^a\otimes\sigma_k^a
\end{align}
which acts on $\mathbb{C}^2\otimes\mathfrak{H}_N$.
Then we define the transfer matrix
\begin{align}
T_N(\lambda)=L_N(\lambda)L_{N-1}(\lambda)\cdots L_1(\lambda).
\end{align}
The basic property of the $L$-operator (\ref{eq:L_as_sum_of_sigmas})
is that it satisfies the quantum Yang--Baxter relations, i.e.,
\begin{align}\label{eq:RLL=LLR}
R(\lambda-\mu)
\left(
L_{k}(\lambda)\otimes L_{k}(\mu)
\right)=
\left(
L_{k}(\mu)\otimes L_k(\lambda)
\right)
R(\lambda-\mu)
\end{align}
where
\begin{align}
R(\lambda)
=\frac{1}{\lambda +i}\left(
\left(\frac{\lambda}{2}+i\right)I\otimes I
+\frac{\lambda}{2}\sum_{a=1}^3\sigma^a\otimes\sigma^a
\right).
\end{align}

As a corollary of the quantum Yang--Baxter equation (\ref{eq:RLL=LLR}),
one can show that the transfer matrices $T_N(\lambda)$ and $T_N(\mu)$
commute for any parameters $\lambda$ and $\mu$.
It is clear from the definition of $L$-operator $L_k(\lambda)$,
see (\ref{eq:L_as_sum_of_sigmas}), that the transfer matrix
can be treated as $2\times 2$ matrix
\begin{align}
T_N(\lambda)=
\left(
\begin{array}{cc}
A_N(\lambda) & B_N(\lambda)\\
C_N(\lambda) & D_N(\lambda)
\end{array}
\right),
\end{align}
where $A_N(\lambda)$, $B_N(\lambda)$, $C_N(\lambda)$ and $D_N(\lambda)$
are operators acting on the space of states $\mathfrak{H}_N$.
The fundamental consequence of the fact that the transfer matrix $T_N(\lambda)$
also satisfies the quantum Yang--Baxter equation is that the operators
$B_N(\lambda)$ and $B_N(\mu)$ commute for any parameters $\lambda\in\mathbb{C}$
and $\mu\in\mathbb{C}$.

Let $\tau_N(\lambda)$ be the trace of the transfer matrix $T_N(\lambda)$
on the auxiliary space:
\begin{align}
\tau_N(\lambda)=A_N(\lambda)+D_N(\lambda).
\end{align}
Then the main observation of the algebraic Bethe ansatz analysis
of the XXX model is the following relation (see \cite{FT}):

\begin{theorem}\label{th:XXX_hamiltonian_tau}
We have
\begin{align}
\mathcal{H}_N=
\frac{iJ}{2}\frac{d}{d\lambda}\log \tau_N(\lambda)\Bigr|_{\lambda=\frac{i}{2}}
-\frac{NJ}{2}\mathbb{I}_N
\end{align}\qed
\end{theorem}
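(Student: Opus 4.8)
The plan is to exploit the fact that the $L$-operator degenerates to a permutation operator at the distinguished point $\lambda=\frac{i}{2}$. Writing $P_{0k}$ for the operator transposing the two-dimensional auxiliary space (labelled $0$) with the quantum space $V_k$, the elementary identity $P=\frac12\bigl(I\otimes I+\sum_{a=1}^3\sigma^a\otimes\sigma^a\bigr)$ combined with (\ref{eq:L_as_sum_of_sigmas}) gives
\begin{align*}
L_k\Bigl(\tfrac{i}{2}\Bigr)=\frac{i}{2}\Bigl(I\otimes\mathbb{I}_N+\sum_{a=1}^3\sigma^a\otimes\sigma_k^a\Bigr)=i\,P_{0k}.
\end{align*}
First I would substitute this into the definition of $T_N$, obtaining $T_N(\frac{i}{2})=i^N P_{0N}P_{0,N-1}\cdots P_{01}$, and then trace over the auxiliary space. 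Using the transposition relation $P_{0k}P_{0j}=P_{kj}P_{0k}$ together with $\operatorname{tr}_0 P_{0k}=\mathbb{I}_N$, the trace collapses to the cyclic shift operator $U$ on $\mathfrak{H}_N$, so that $\tau_N(\frac{i}{2})=i^N U$. In particular $\tau_N(\frac{i}{2})$ is invertible, which is exactly what makes the logarithmic derivative well defined.

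Next I would differentiate. Since $\frac{d}{d\lambda}L_k(\lambda)=I\otimes\mathbb{I}_N$ is the identity on $\mathbb{C}^2\otimes V_k$, the Leibniz rule yields
\begin{align*}
\frac{d}{d\lambda}T_N(\lambda)\Bigr|_{\lambda=\frac{i}{2}}=i^{N-1}\sum_{k=1}^N P_{0N}\cdots P_{0,k+1}\,P_{0,k-1}\cdots P_{01},
\end{align*}
where in the $k$-th term the factor $P_{0k}$ has been replaced by the identity. Taking the trace over the auxiliary space and multiplying on the left by $\tau_N(\frac{i}{2})^{-1}=i^{-N}U^{-1}$, I would invoke the permutation algebra once more to show that each summand reduces to the nearest-neighbour transposition $P_{k,k+1}$, with the wrap-around term $P_{N,N+1}:=P_{N,1}$ supplied by the cyclicity of the trace and the periodic boundary condition. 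This gives
\begin{align*}
\frac{d}{d\lambda}\log\tau_N(\lambda)\Bigr|_{\lambda=\frac{i}{2}}=-i\sum_{k=1}^N P_{k,k+1}.
\end{align*}

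Finally I would convert the transpositions back into Pauli matrices via $P_{k,k+1}=\frac12\bigl(\mathbb{I}_N+\sum_{a=1}^3\sigma_k^a\sigma_{k+1}^a\bigr)$, so that
\begin{align*}
\frac{iJ}{2}\frac{d}{d\lambda}\log\tau_N(\lambda)\Bigr|_{\lambda=\frac{i}{2}}=\frac{J}{2}\sum_{k=1}^N P_{k,k+1}=\frac{J}{4}\sum_{k=1}^N\Bigl(\mathbb{I}_N+\sum_{a=1}^3\sigma_k^a\sigma_{k+1}^a\Bigr),
\end{align*}
and subtracting $\frac{NJ}{2}\mathbb{I}_N$ reproduces (\ref{def:Heisenberg_1/2}) term by term. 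The hard part will be the bookkeeping in the logarithmic-derivative step: carefully ordering the non-commuting permutation operators under $\operatorname{tr}_0$ and verifying that the reduction of $i^{-N}U^{-1}\operatorname{tr}_0[\,\cdots\,]$ produces precisely one $P_{k,k+1}$ per term, with the correct treatment of the wrap-around term at $k=N$ and no spurious scalar factors. Everything else is a direct substitution at the special point $\lambda=\frac{i}{2}$.
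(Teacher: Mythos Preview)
The paper does not actually prove this theorem: it is stated as a known fact with a reference to Faddeev--Takhtajan \cite{FT} and closed with a \qed\ symbol. Your argument is the standard derivation found in that reference and in textbook treatments of the algebraic Bethe ansatz: evaluate $L_k(\tfrac{i}{2})=iP_{0k}$, identify $\tau_N(\tfrac{i}{2})$ with $i^N$ times the cyclic shift, differentiate term by term, and reduce the permutation products under $\operatorname{tr}_0$ to nearest-neighbour transpositions. The computation is correct, including the final arithmetic $\tfrac{J}{4}\sum_k(\mathbb{I}_N+\sum_a\sigma_k^a\sigma_{k+1}^a)-\tfrac{NJ}{2}\mathbb{I}_N=\mathcal{H}_N$, so there is nothing to compare against beyond noting that you have supplied the proof the paper chose to cite rather than reproduce.
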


Now it is time to consider the local vectors $v_+=\left(\!
\begin{array}{c}
1\\0
\end{array}\!
\right)
\in V_k\simeq\mathbb{C}^2$ ($k=1,\cdots N$)
and the global one
\begin{align}
|0\rangle_N=v_+\otimes\cdots\otimes v_+\in\mathfrak{H}_N.
\end{align}
It is well-known that the vector $|0\rangle_N$ is an eigenvector of the operators
$A_N(\lambda)$, $D_N(\lambda)$ and $C_N(\lambda)$, namely,
\begin{align}
\label{eq:value_transfermatrix_A}
A_N(\lambda)|0\rangle_N&=\left(\lambda+\frac{i}{2}\right)^N|0\rangle_N,\\
\label{eq:value_transfermatrix_D}
D_N(\lambda)|0\rangle_N&=\left(\lambda-\frac{i}{2}\right)^N|0\rangle_N,\\
C_N(\lambda)|0\rangle_N&=0.
\end{align}

\begin{definition}
Define the Bethe vector corresponding to a collection of pairwise distinct
complex numbers $\lambda=(\lambda_1,\ldots,\lambda_\ell)$ as
\begin{align}
\Psi_N(\lambda_1,\ldots,\lambda_\ell)=B_N(\lambda_1)\cdots B_N(\lambda_\ell)|0\rangle_N.
\end{align}
\qed
\end{definition}

The basic property of the Bethe vectors is that $\Psi_N(\lambda_1,\ldots,\lambda_\ell)$
is an eigenvector of the operator $\tau_N(\lambda)$,
and thus of the Hamiltonian $\mathcal{H}_N$ (see Theorem \ref{th:XXX_hamiltonian_tau}) if and only if
\begin{enumerate}
\item
the parameters $\lambda_1,\ldots,\lambda_\ell$ satisfy the system of the Bethe
ansatz equations
\begin{align}
\left(
\frac{\lambda_k+\frac{i}{2}}{\lambda_k-\frac{i}{2}}
\right)^N
=\prod_{j=1 \atop j\neq k}^\ell
\frac{\lambda_k-\lambda_j+i}{\lambda_k-\lambda_j-i},
\qquad
(k=1,\cdots,\ell),
\end{align}
\item
and $\Psi_N(\lambda_1,\ldots,\lambda_\ell)\neq 0$.
\end{enumerate}
This result is derived from the action of $\tau_N(\lambda)$ on
the Bethe vectors $\Psi_N(\lambda_1,\ldots,\lambda_\ell)$.
More precisely, according to the standard argument (see, e.g., \cite{FT}),
we have the following expressions:
\begin{align}
\nonumber
&\{A_N(\lambda)+D_N(\lambda)\}B_N(\lambda_1)\cdots B_N(\lambda_\ell)|0\rangle_N\\
\label{eq:action_of_tN_on_Bethe_vectors}
&=
\Lambda(\lambda;\lambda_1,\cdots,\lambda_\ell)\prod_{j=1}^\ell B_N(\lambda_j)|0\rangle_N
+\sum_{k=1}^\ell\biggl\{
\Lambda_k(\lambda;\lambda_1,\cdots,\lambda_\ell)B_N(\lambda)
\prod_{j=1\atop j\neq k}^\ell B_N(\lambda_j)|0\rangle_N\biggr\},
\end{align}
where
\begin{align}
\label{eq:eigenvalue_of_tau_on_Bethe_vectors}
\Lambda(\lambda;\lambda_1,\cdots,\lambda_\ell)=
\left(
\lambda+\frac{i}{2}
\right)^N
\prod_{j=1}^\ell \frac{\lambda-\lambda_j-i}{\lambda-\lambda_j}+
\left(
\lambda-\frac{i}{2}
\right)^N
\prod_{j=1}^\ell \frac{\lambda_j-\lambda-i}{\lambda_j-\lambda}
\end{align}
and for $k=1,2,\ldots,k$ we have
\begin{align}
\label{eq:Lambda_k_for_Bethe_vectors}
\Lambda_k(\lambda;\lambda_1,\cdots,\lambda_\ell)=
\frac{i}{\lambda-\lambda_k}
\Biggl\{
\left(
\lambda_k+\frac{i}{2}
\right)^N
\prod_{j=1\atop j\neq k}^\ell \frac{\lambda_k-\lambda_j-i}{\lambda_k-\lambda_j}-
\left(
\lambda_k-\frac{i}{2}
\right)^N
\prod_{j=1\atop j\neq k}^\ell \frac{\lambda_j-\lambda_k-i}{\lambda_j-\lambda_k}
\Biggr\}.
\end{align}

We remark that combining the identity (\ref{eq:action_of_tN_on_Bethe_vectors})
and Theorem \ref{th:XXX_hamiltonian_tau},
we deduce that the energy eigenvalue $\mathcal{E}$ of the Hamiltonian $\mathcal{H}_N$
corresponding to the eigenvector $\Psi_N(\lambda_1,\ldots,\lambda_\ell)$ is
\begin{align}\label{eq:E}
\mathcal{E}=-\frac{J}{2}\sum_{j=1}^\ell\frac{1}{\lambda_j^2+\frac{1}{4}}
\end{align}
if $\lambda_j\neq\pm\frac{i}{2}$ for all $j=1,2,\ldots,\ell$.
It is well known that the Hamiltonian $\mathcal{H}_N$ commutes with the action
of the algebra $\mathfrak{sl}_2$ which acts on $\mathfrak{H}_N$.
In particular, the energy eigenvalue is constant for all eigenvectors
belonging to the same irreducible $\mathfrak{sl}_2$-module.
To be more precise, let $\mathbf{m}$ be the $m$-dimensional irreducible $\mathfrak{sl}_2$-module.
Suppose that we have a non-zero Bethe vector $\Psi_N(\lambda_1,\ldots,\lambda_\ell)$
constructed from the solutions $\lambda_1,\ldots,\lambda_\ell$ to the Bethe ansatz equations.
Then it is known that the vector $\Psi_N(\lambda_1,\ldots,\lambda_\ell)$ is the
highest weight vector of the module $\mathbf{m}$ where $m=N-2\ell+1$.

Now it is time to recall the definition of the Nepomechie--Wang eigenstates.
To begin with, recall that a solution to the Bethe ansatz equation is called singular,
if it has the form
\begin{align}\label{eq:sing2}
\lambda=\left\{\frac{i}{2},-\frac{i}{2},
\lambda_3,\ldots,\lambda_\ell\right\}.
\end{align}
Note that since $B_N(\frac{i}{2})B_N(-\frac{i}{2})=0$ in this case, we have
\[
\Psi_\lambda=B_N\!\left(\frac{i}{2}\right)\!B_N\!\left(-\frac{i}{2}\right)\!
B_N(\lambda_3)\cdots B_N(\lambda_\ell)|0\rangle_N=0,
\]
and the energy eigenvalue $\mathcal{E}$ of the state $\Psi_\lambda$ is divergent.
To resolve this problem, i.e., to construct a non-zero eigenvector of the Hamiltonian
(\ref{def:Heisenberg_1/2}), following \cite{NW} we define the perturbed  version of
(\ref{eq:sing2}) as follows:
\begin{align}
\label{eq:regularization}
\lambda_1=\frac{i}{2}+\epsilon+c\,\epsilon^N,\qquad
\lambda_2=-\frac{i}{2}+\epsilon.
\end{align}
We note that a similar regularization method is described in \cite{AV} and
\cite[Eq.(3.4)]{BMSZ}.

Let
\begin{align}
\Psi^{(\epsilon)}_\lambda:=
\frac{1}{\epsilon^N}
B_N\!\left(\frac{i}{2}+\epsilon+c\,\epsilon^N\right)\!B_N\!\left(-\frac{i}{2}+\epsilon\right)\!
B_N(\lambda_3)\cdots B_N(\lambda_\ell)|0\rangle_N.
\end{align}
Then we need to prove the following statement.
\begin{proposition}\label{prop:NW}
Suppose that $c$ is given by (\ref{eq:c1_NepomechieWang})
and (\ref{eq:c2_NepomechieWang}).

(1) The vector $\lim_{\epsilon\rightarrow 0}\Psi^{(\epsilon)}_\lambda=\Psi_\lambda$ is well-defined.

(2) $\Psi_\lambda$ is an eigenvector of $\mathcal{H}_N$.
\qed
\end{proposition}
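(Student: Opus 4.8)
The plan is to run everything through the exact action formula (\ref{eq:action_of_tN_on_Bethe_vectors}), which holds for \emph{arbitrary} pairwise distinct rapidities with no Bethe equations assumed, and only afterwards let $\epsilon\to0$. The first thing I would record is that each $B_N(\lambda)$ is a polynomial in $\lambda$ of degree $N-1$, so the numerator
\[
\Xi(\epsilon):=B_N\!\left(\tfrac{i}{2}+\epsilon+c\,\epsilon^N\right)\!B_N\!\left(-\tfrac{i}{2}+\epsilon\right)\!B_N(\lambda_3)\cdots B_N(\lambda_\ell)|0\rangle_N
\]
is a polynomial in $\epsilon$. Hence the only possible obstruction to the existence of $\lim_{\epsilon\to0}\epsilon^{-N}\Xi(\epsilon)$ is the order of vanishing of $\Xi$ at $\epsilon=0$, and Part (1) is equivalent to the statement $\Xi(\epsilon)=O(\epsilon^N)$, in which case $\Psi_\lambda$ is just the Taylor coefficient of $\epsilon^N$.

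For Part (1) I would expand $\Xi(\epsilon)$ order by order. The constant term is $B_N(\tfrac i2)B_N(-\tfrac i2)B_N(\lambda_3)\cdots|0\rangle_N=0$ by the singular identity $B_N(\tfrac i2)B_N(-\tfrac i2)=0$. Crucially, the tuning term $c\,\epsilon^N$ enters $\lambda_1$ only at order $\epsilon^N$, so it cannot affect the coefficients of $\epsilon^1,\dots,\epsilon^{N-1}$, which must therefore vanish on their own. Here I would use the explicit one-magnon expression $B_N(\lambda)|0\rangle_N=i\sum_{k=1}^N(\lambda+\tfrac i2)^{N-k}(\lambda-\tfrac i2)^{k-1}\sigma_k^-|0\rangle_N$ (immediate from (\ref{eq:L_as_sum_of_sigmas})) together with the commutativity of the $B_N$'s to show that the synchronization $\lambda_2=-\tfrac i2+\epsilon$ (the same leading $\epsilon$ as $\lambda_1$) forces the intermediate orders to cancel, leaving a zero of order exactly $N$. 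The conditions (\ref{eq:c1_NepomechieWang}) and (\ref{eq:c2_NepomechieWang}) then fix the order-$\epsilon^N$ coefficient; they arise from demanding that the Bethe equations (\ref{eq:Bethe_ansatz}) for $\lambda_1$ and for $\lambda_2$ balance at their leading orders ($\epsilon^{-N}$ for the $\lambda_1$-equation, $\epsilon^{N}$ for the $\lambda_2$-equation).

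For Part (2) I would substitute the perturbed rapidities into (\ref{eq:action_of_tN_on_Bethe_vectors}) and divide by $\epsilon^N$,
\[
\tau_N(\mu)\,\Psi^{(\epsilon)}_\lambda=\Lambda(\mu;\vec\lambda(\epsilon))\,\Psi^{(\epsilon)}_\lambda+\frac{1}{\epsilon^N}\sum_{k=1}^\ell\Lambda_k(\mu;\vec\lambda(\epsilon))\,B_N(\mu)\!\!\prod_{j\neq k}\!B_N(\lambda_j(\epsilon))|0\rangle_N .
\]
The diagonal coefficient $\Lambda(\mu;\vec\lambda(\epsilon))$ from (\ref{eq:eigenvalue_of_tau_on_Bethe_vectors}) has a finite limit: the apparent pole at $\mu=\tfrac i2$ from the factor $1/(\mu-\lambda_1)$ is cancelled by the numerator of the $\lambda_2$-factor, so $\Lambda(\mu;\vec\lambda(0))$ is regular and gives the eigenvalue, hence (through Theorem \ref{th:XXX_hamiltonian_tau}) the energy. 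It then remains to kill the ``unwanted'' sum in the limit. Using (\ref{eq:Lambda_k_for_Bethe_vectors}) one checks that $\Lambda_1$ and $\Lambda_2$ are themselves $O(\epsilon^N)$, and that after dividing by $\epsilon^N$ their surviving coefficients multiply the two independent vectors $B_N(\mu)B_N(-\tfrac i2)\prod_{j\ge3}B_N(\lambda_j)|0\rangle_N$ and $B_N(\mu)B_N(\tfrac i2)\prod_{j\ge3}B_N(\lambda_j)|0\rangle_N$; requiring both coefficients to vanish reproduces exactly (\ref{eq:c1_NepomechieWang}) and (\ref{eq:c2_NepomechieWang}), consistently with Part (1). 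For the regular indices $k=3,\dots,\ell$ the coefficient $\Lambda_k$ vanishes at $\epsilon=0$ because $\lambda_k$ solves the reduced Bethe equation obtained by inserting $\pm\tfrac i2$, so $\Lambda_k=O(\epsilon)$, while the accompanying vector still carries the near-singular factor $B_N(\lambda_1(\epsilon))B_N(\lambda_2(\epsilon))$ and is therefore also small.

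I expect the main obstacle to be precisely this last order-counting for the regular roots: balancing the $\epsilon^{-N}$ normalization against the joint smallness of $\Lambda_k$ and of the near-singular vectors $B_N(\mu)\prod_{j\ne k}B_N(\lambda_j(\epsilon))|0\rangle_N$, and showing each product is $o(\epsilon^N)$. The cleanest route is probably to prove a single sharp lemma on the order of vanishing of $B_N(\tfrac i2+\epsilon)B_N(-\tfrac i2+\epsilon)$ acting on the descendant states $\prod_{j\ge3}B_N(\lambda_j)|0\rangle_N$, exploiting both the Bethe equations satisfied by $\lambda_3,\dots,\lambda_\ell$ and the two-string structure of the singular pair; such a lemma would simultaneously deliver the estimate $\Xi(\epsilon)=O(\epsilon^N)$ needed for Part (1) and the vanishing of every unwanted term needed for Part (2).
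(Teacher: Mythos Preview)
Your proposal is correct and follows essentially the same route as the paper: both arguments run identity (\ref{eq:action_of_tN_on_Bethe_vectors}) at the perturbed rapidities, divide by $\epsilon^N$, and kill the unwanted $k=1,2$ terms by showing that $\Lambda_1,\Lambda_2=O(\epsilon^{N+1})$ for the tuned values of $c$ --- this is exactly the content of the paper's Lemma~\ref{lem:asymptotics}, which you only sketch but the paper carries out explicitly. One simplification worth noting: the ``sharp lemma'' you anticipate needing at the end is easier than you suggest, because it does not require the Bethe equations for $\lambda_3,\dots,\lambda_\ell$; the paper (following Appendix~A of \cite{NW}) uses the purely operator-level estimate $B_N(\tfrac{i}{2}+\epsilon)B_N(-\tfrac{i}{2}+\epsilon)=O(\epsilon^N)$, which immediately gives both $\Xi(\epsilon)=O(\epsilon^N)$ for Part~(1) and the $O(\epsilon^N)$ smallness of each $k\ge3$ unwanted vector for Part~(2), so your one-magnon expansion is not the most efficient tool here.
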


\begin{remark}
From the compatibility condition of $c$ in (\ref{eq:c1_NepomechieWang})
and (\ref{eq:c2_NepomechieWang}), \cite{NW} deduce a criterion for
the singular solutions to provide non-zero Bethe vectors.
Their criterion is verified up to the case of $N\leq 14$ by an
extensive numerical computation \cite{HNS1}.
\qed
\end{remark}

Although these assertions are essentially proved in \cite{NW},
their normalization of $B_N(\lambda)$ is different from
the standard normalization used in the present paper.
Since this difference of the normalizations changes the structure of the proof,
we include some of the details of an alternative proof here.

Our proof of (1) is similar to the proof of
$B_N(\frac{i}{2}+\epsilon)B_N(-\frac{i}{2}+\epsilon)\sim\epsilon^N$ given in
Appendix A of \cite{NW}.\footnote{In \cite{NW}, our $B_N(\lambda)$ is denoted
by $\tilde{B}_N(\lambda)$.}
On the other hand, the proof of the statement corresponding to (1)
given in \cite{NW} is simpler.\footnote{However their proof seems
slightly incomplete since we have
$\tilde{B}_N(\lambda_1)\tilde{B}_N(\lambda_2)|0\rangle_N\neq
\tilde{B}_N(\lambda_1)|0\rangle_N\times\tilde{B}_N(\lambda_2)|0\rangle_N$.
Here $\tilde{B}_N(\lambda)$ stands for the $B_N$ operator in the normalization of \cite{NW}.}

For the proof of the statement (2), we prepare the following lemma.
Note that the following behaviors are different from the corresponding ones of
\cite{NW} since we are using a different normalization.
\begin{lemma}\label{lem:asymptotics}
We use the regularization of equation (\ref{eq:regularization}).

(a) If we take
\begin{align}\label{eq:c1_NepomechieWang}
c=-\frac{2}{i^{N+1}}\prod^\ell_{j=3}\frac{\lambda_j-\frac{3i}{2}}{\lambda_j+\frac{i}{2}},
\end{align}
then we have
\begin{align}
\Lambda_1(\lambda;\lambda_1,\cdots,\lambda_\ell)\sim
\frac{\epsilon^{N+1}}{\lambda-\frac{i}{2}-\epsilon-c\,\epsilon^N}.
\end{align}

(b) If we take
\begin{align}\label{eq:c2_NepomechieWang}
c=2i^{N+1}\prod^\ell_{j=3}\frac{\lambda_j+\frac{3i}{2}}{\lambda_j-\frac{i}{2}},
\end{align}
then we have
\begin{align}
\Lambda_2(\lambda;\lambda_1,\cdots,\lambda_\ell)\sim\frac{\epsilon^{N+1}}{\lambda+\frac{i}{2}-\epsilon}.
\end{align}
\end{lemma}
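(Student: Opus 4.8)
The plan is to start from the closed form (\ref{eq:Lambda_k_for_Bethe_vectors}), which writes $\Lambda_k$ as $\frac{i}{\lambda-\lambda_k}$ times a difference of two products. I will abbreviate this difference as $S_k:=S_k^{+}-S_k^{-}$, where $S_k^{+}=(\lambda_k+\frac{i}{2})^N\prod_{j\neq k}\frac{\lambda_k-\lambda_j-i}{\lambda_k-\lambda_j}$ is the contribution of $A_N$ and $S_k^{-}=(\lambda_k-\frac{i}{2})^N\prod_{j\neq k}\frac{\lambda_j-\lambda_k-i}{\lambda_j-\lambda_k}$ that of $D_N$. For $k=1$ the prefactor is $\frac{i}{\lambda-\lambda_1}=\frac{i}{\lambda-\frac{i}{2}-\epsilon-c\,\epsilon^N}$, so the denominator in the asserted asymptotic is already present and the whole content of part (a) reduces to the claim that, for the value of $c$ in (\ref{eq:c1_NepomechieWang}), one has $S_1=O(\epsilon^{N+1})$, the factor $i$ and the overall normalisation being absorbed into the symbol $\sim$. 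Part (b) reduces in the same way to $S_2=O(\epsilon^{N+1})$.

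Next I would substitute the regularisation (\ref{eq:regularization}) and locate, in each of $S_1^{+}$ and $S_1^{-}$, the single factor responsible for the $\epsilon^N$ smallness. In $S_1^{+}$ it is the $j=2$ factor: since $\lambda_1-\lambda_2=i+c\,\epsilon^N$, we have $\frac{\lambda_1-\lambda_2-i}{\lambda_1-\lambda_2}=\frac{c\,\epsilon^N}{i+c\,\epsilon^N}\sim-ic\,\epsilon^N$, while all remaining factors tend to finite nonzero limits as $\epsilon\to0$ (using $\lambda_1\to\frac{i}{2}$), namely $(\lambda_1+\frac{i}{2})^N\to i^N$ and $\prod_{j=3}^{\ell}\frac{\lambda_1-\lambda_j-i}{\lambda_1-\lambda_j}\to\prod_{j=3}^{\ell}\frac{\lambda_j+\frac{i}{2}}{\lambda_j-\frac{i}{2}}$. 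In $S_1^{-}$ the small factor is instead $(\lambda_1-\frac{i}{2})^N=(\epsilon+c\,\epsilon^N)^N\sim\epsilon^N$, the other factors converging to $\frac{\lambda_2-\lambda_1-i}{\lambda_2-\lambda_1}\to2$ and $\prod_{j=3}^{\ell}\frac{\lambda_j-\lambda_1-i}{\lambda_j-\lambda_1}\to\prod_{j=3}^{\ell}\frac{\lambda_j-\frac{3i}{2}}{\lambda_j-\frac{i}{2}}$. Hence both $S_1^{+}$ and $S_1^{-}$ are of order $\epsilon^N$, each with an explicit leading coefficient.

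I would then read off the coefficient of $\epsilon^N$ in $S_1$ and impose its vanishing, a single linear condition on $c$. The leading term of $S_1^{+}$ is $-i^{N+1}c\prod_{j=3}^{\ell}\frac{\lambda_j+\frac{i}{2}}{\lambda_j-\frac{i}{2}}\,\epsilon^N$ and that of $S_1^{-}$ is $2\prod_{j=3}^{\ell}\frac{\lambda_j-\frac{3i}{2}}{\lambda_j-\frac{i}{2}}\,\epsilon^N$; equating them and cancelling the common factor $\prod_{j=3}^{\ell}(\lambda_j-\frac{i}{2})^{-1}$ gives exactly (\ref{eq:c1_NepomechieWang}). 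With this $c$ the $\epsilon^N$ terms cancel by construction, leaving $S_1=O(\epsilon^{N+1})$ and therefore $\Lambda_1=\frac{i}{\lambda-\lambda_1}O(\epsilon^{N+1})\sim\frac{\epsilon^{N+1}}{\lambda-\frac{i}{2}-\epsilon-c\,\epsilon^N}$, as claimed. Part (b) is entirely parallel with the roles of the two products interchanged: for $k=2$ the small factor in $S_2^{+}$ is $(\lambda_2+\frac{i}{2})^N=\epsilon^N$, while in $S_2^{-}$ it is the $j=1$ factor $\frac{\lambda_1-\lambda_2-i}{\lambda_1-\lambda_2}\sim-ic\,\epsilon^N$; taking the limits at $\lambda_2\to-\frac{i}{2}$, balancing the two $\epsilon^N$ coefficients, and using $(-i)^{-(N+1)}=i^{N+1}$ produces (\ref{eq:c2_NepomechieWang}).

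The step I expect to be the main obstacle is conceptual rather than computational: recognising that the $c\,\epsilon^N$ term built into $\lambda_1$ in (\ref{eq:regularization}) is exactly what keeps $\lambda_1-\lambda_2-i=c\,\epsilon^N$ from vanishing identically. It is this term that promotes $S_1^{+}$ from an exact zero to a genuine $O(\epsilon^N)$ quantity, so that the two $O(\epsilon^N)$ contributions in $S_1$ are both present and can be tuned against one another; the same mechanism, acting through the $j=1$ factor, governs $S_2$. Beyond this, the only real care needed is the bookkeeping of the powers of $i$, the signs, and the factor $2$ while extracting the leading coefficients, so as to confirm that the cancellation conditions coming from (a) and (b) are precisely the two expressions recorded in (\ref{eq:c1_NepomechieWang}) and (\ref{eq:c2_NepomechieWang}).
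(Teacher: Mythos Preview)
Your proposal is correct and follows essentially the same approach as the paper's proof: both substitute the regularisation (\ref{eq:regularization}) into the explicit formula (\ref{eq:Lambda_k_for_Bethe_vectors}), isolate the single small factor of order $\epsilon^N$ in each of the two products, read off the leading coefficients, and determine $c$ by requiring the $\epsilon^N$ contributions to cancel. Your $S_k^{\pm}$ bookkeeping is a slightly tidier packaging, but the computation and the identification of the values of $c$ are identical to the paper's.
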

\begin{proof}
(a) We have
\begin{align*}
\Lambda_1&=
\frac{i}{\lambda-\lambda_1}
\Biggl\{
\left(
\lambda_1+\frac{i}{2}
\right)^N
\prod_{j=2}^\ell \frac{\lambda_1-\lambda_j-i}{\lambda_1-\lambda_j}-
\left(
\lambda_1-\frac{i}{2}
\right)^N
\prod_{j=2}^\ell \frac{\lambda_j-\lambda_1-i}{\lambda_j-\lambda_1}
\Biggr\}\\
&=
\frac{i}{\lambda-\lambda_1}
\Biggl\{
i^N\cdot\frac{c\,\epsilon^N}{i}
\prod_{j=3}^\ell \frac{\frac{i}{2}-\lambda_j-i}{\frac{i}{2}-\lambda_j}-
\epsilon^N\cdot\frac{-2i}{-i}
\prod_{j=3}^\ell \frac{\lambda_j-\frac{i}{2}-i}{\lambda_j-\frac{i}{2}}
\Biggr\}\\
&=
\frac{i\,\epsilon^N}{\lambda-\lambda_1}
\Biggl\{
c\cdot i^{N-1}
\prod_{j=3}^\ell \frac{\lambda_j+\frac{i}{2}}{\lambda_j-\frac{i}{2}}-
2
\prod_{j=3}^\ell \frac{\lambda_j-\frac{3i}{2}}{\lambda_j-\frac{i}{2}}
\Biggr\}.
\end{align*}
Therefore if we take $c$ as in (\ref{eq:c1_NepomechieWang}),
we see that $\Lambda_1\sim\epsilon^{N+1}/(\lambda-\lambda_1)$.

(b) We have
\begin{align*}
\Lambda_2&=
\frac{i}{\lambda-\lambda_2}
\Biggl\{
\left(
\lambda_2+\frac{i}{2}
\right)^N
\prod_{j=1\atop j\neq 2}^\ell \frac{\lambda_2-\lambda_j-i}{\lambda_2-\lambda_j}-
\left(
\lambda_2-\frac{i}{2}
\right)^N
\prod_{j=1\atop j\neq 2}^\ell \frac{\lambda_j-\lambda_2-i}{\lambda_j-\lambda_2}
\Biggr\}\\
&=
\frac{i}{\lambda-\lambda_2}
\Biggl\{
\epsilon^N\cdot\frac{-2i}{-i}
\prod_{j=3}^\ell \frac{-\frac{i}{2}-\lambda_j-i}{-\frac{i}{2}-\lambda_j}-
(-i)^N\,\frac{c\,\epsilon^N}{i}
\prod_{j=3}^\ell \frac{\lambda_j+\frac{i}{2}-i}{\lambda_j+\frac{i}{2}}
\Biggr\}\\
&=
\frac{i\,\epsilon^N}{\lambda-\lambda_2}
\Biggl\{
2\prod_{j=3}^\ell \frac{\lambda_j+\frac{3i}{2}}{\lambda_j+\frac{i}{2}}-
\frac{c}{i^{N+1}}
\prod_{j=3}^\ell \frac{\lambda_j-\frac{i}{2}}{\lambda_j+\frac{i}{2}}
\Biggr\}.
\end{align*}
Therefore if we take $c$ as in (\ref{eq:c2_NepomechieWang}),
we see that $\Lambda_1\sim\epsilon^{N+1}/(\lambda-\lambda_2)$.
\end{proof}
Applying the statements (a) and (b) of Lemma \ref{lem:asymptotics}
to identity (\ref{eq:action_of_tN_on_Bethe_vectors}),
we come to a proof of Proposition \ref{prop:NW} (2).

Finally let us give a remark on meanings of the regularization (\ref{eq:regularization}).
As we see in the present section, this regularization correctly provides
eigenvectors of the Hamiltonian $\mathcal{H}_N$.
Moreover we will show in the next two sections that the regularization (\ref{eq:regularization})
indeed provides the correct energy eigenvalues.
Recall that the Schr\"{o}dinger equation is the eigenvalue problem
$\mathcal{H}_N\Psi_\lambda =\mathcal{E}_\lambda\Psi_\lambda$.
Therefore we can justify the regularization (\ref{eq:regularization})
as it provides enough information pertaining to the singular states to the
Schr\"{o}dinger equation for the spin 1/2 isotropic Heisenberg model.

\section{Energy eigenvalues for the Nepomechie--Wang states}
Now we derive the energy eigenvalues for the Nepomechie--Wang states.
The main result is Theorem \ref{th:main}.

\paragraph{1)}
Let $\mathcal{E}$ be the energy eigenvalue corresponding to the
solutions $\{\lambda_1,\lambda_2,\ldots,\lambda_\ell\}$ of the Bethe ansatz equations.
From Theorem \ref{th:XXX_hamiltonian_tau},
we see that it is enough to compute
\begin{align}
\mathcal{E}=\frac{J}{2}
\Biggl\{
i\frac{d}{d\lambda}\log\Lambda(\lambda;\lambda_1,\cdots,\lambda_\ell)\biggr|_{\lambda=\frac{i}{2}}
-N\Biggr\}
\end{align}
where
\begin{align}
\Lambda(\lambda;\lambda_1,\cdots,\lambda_\ell)=
\left(
\lambda+\frac{i}{2}
\right)^N
\prod_{j=1}^\ell \frac{\lambda-\lambda_j-i}{\lambda-\lambda_j}+
\left(
\lambda-\frac{i}{2}
\right)^N
\prod_{j=1}^\ell \frac{\lambda_j-\lambda-i}{\lambda_j-\lambda}
\end{align}
as in (\ref{eq:eigenvalue_of_tau_on_Bethe_vectors}).
Thus it is enough to compute
\begin{align}
\varepsilon=i\frac{d}{d\lambda}\log\Lambda\biggr|_{\lambda=\frac{i}{2}}
=\frac{i\frac{d\Lambda}{d\lambda}}{\Lambda}\biggr|_{\lambda=\frac{i}{2}}.
\end{align}

\paragraph{2)}
Let us compute the denominator of $\varepsilon$:
\begin{align}
\varepsilon_{\rm deno}:=
\Lambda\left(\frac{i}{2};\lambda_1,\cdots,\lambda_\ell\right)=
i^N\prod_{j=1}^\ell\frac{\lambda_j+\frac{i}{2}}{\lambda_j-\frac{i}{2}}.
\end{align}
By using the regularizations (\ref{eq:regularization}), we obtain
\begin{align}
\varepsilon_{\rm deno}:=
i^N\cdot
\frac{i+\epsilon+c\,\epsilon^N}{\epsilon+c\,\epsilon^N}\cdot
\frac{\epsilon}{\epsilon-i}\,
\prod_{j=3}^\ell\frac{\lambda_j+\frac{i}{2}}{\lambda_j-\frac{i}{2}}
=i^N\cdot
\frac{i+\epsilon+c\,\epsilon^N}{(1+c\,\epsilon^{N-1})(\epsilon-i)}\,
\prod_{j=3}^\ell\frac{\lambda_j+\frac{i}{2}}{\lambda_j-\frac{i}{2}}.
\end{align}

\paragraph{3)}
By using the identity
\begin{align*}
\frac{d}{d\lambda}\frac{\lambda-\lambda_j-i}{\lambda-\lambda_j}
=\frac{d}{d\lambda}\!\left(1-\frac{i}{\lambda-\lambda_j}\right)
=\frac{i}{(\lambda_j-\lambda)^2},
\end{align*}
we have
\begin{align}
\nonumber
i\frac{d\Lambda}{d\lambda}=&\,
A_0(\lambda)+
\sum^\ell_{j=1}A_j(\lambda)\\
&+\text{terms containing at least one}\left(\lambda-\frac{i}{2}\right)
\end{align}
where
\begin{align*}
A_0(\lambda)=iN\!\left(\lambda+\frac{i}{2}\right)^{N-1}
\prod_{j=1}^\ell\frac{\lambda-\lambda_j-i}{\lambda-\lambda_j}
\end{align*}
and for $j=1,2,\ldots,\ell$,
\begin{align*}
A_j(\lambda)=
i\!\left(\lambda+\frac{i}{2}\right)^{N}
\frac{\lambda-\lambda_1-i}{\lambda-\lambda_1}\cdots
\frac{\lambda-\lambda_{j-1}-i}{\lambda-\lambda_{j-1}}\cdot
\frac{i}{(\lambda_j-\lambda)^2}\cdot
\frac{\lambda-\lambda_{j+1}-i}{\lambda-\lambda_{j+1}}\cdots
\frac{\lambda-\lambda_\ell-i}{\lambda-\lambda_\ell}.
\end{align*}
Below we compute the contribution from each term one by one.

\paragraph{4)}
Let us consider $A_0(\lambda)$:
\begin{align*}
A_0\!\left(\frac{i}{2}\right)&=
i^NN\cdot
\frac{\frac{i}{2}-(\frac{i}{2}+\epsilon+c\,\epsilon^N)-i}{\frac{i}{2}-(\frac{i}{2}+\epsilon+c\,\epsilon^N)}\cdot
\frac{\frac{i}{2}-(-\frac{i}{2}+\epsilon)-i}{\frac{i}{2}-(-\frac{i}{2}+\epsilon)}\,
\prod_{j=3}^\ell\frac{\lambda_j+\frac{i}{2}}{\lambda_j-\frac{i}{2}}\\
&=i^NN\cdot
\frac{i+\epsilon+c\,\epsilon^N}{(1+c\,\epsilon^{N-1})(\epsilon-i)}\,
\prod_{j=3}^\ell\frac{\lambda_j+\frac{i}{2}}{\lambda_j-\frac{i}{2}}.
\end{align*}
Therefore we obtain
\begin{align*}
\frac{1}{\varepsilon_{\rm deno}}\cdot
A_0\!\left(\frac{i}{2}\right)=N.
\end{align*}

\paragraph{5)}
Let us consider $A_1(\lambda)$ and $A_2(\lambda)$.
\begin{align*}
A_1\!\left(\frac{i}{2}\right)&=
i^{N+1}
\frac{i}{\{\frac{i}{2}-(\frac{i}{2}+\epsilon+c\,\epsilon^N)\}^2}\cdot
\frac{\frac{i}{2}-(-\frac{i}{2}+\epsilon)-i}{\frac{i}{2}-(-\frac{i}{2}+\epsilon)}\,
\prod_{j=3}^\ell\frac{\lambda_j+\frac{i}{2}}{\lambda_j-\frac{i}{2}}\\
&=-i^{N}
\frac{1}{\epsilon\,(1+c\,\epsilon^{N-1})^2(\epsilon-i)}\,
\prod_{j=3}^\ell\frac{\lambda_j+\frac{i}{2}}{\lambda_j-\frac{i}{2}}.
\end{align*}
On the other hand, we have
\begin{align*}
A_2\!\left(\frac{i}{2}\right)&=
i^{N+1}\,
\frac{\frac{i}{2}-(\frac{i}{2}+\epsilon+c\,\epsilon^N)-i}{\frac{i}{2}-(\frac{i}{2}+\epsilon+c\,\epsilon^N)}\cdot
\frac{i}{\{\frac{i}{2}-(-\frac{i}{2}+\epsilon)\}^2}\,
\prod_{j=3}^\ell\frac{\lambda_j+\frac{i}{2}}{\lambda_j-\frac{i}{2}}\\
&=-i^{N}
\frac{i+\epsilon+c\,\epsilon^N}{\epsilon\,(1+c\,\epsilon^{N-1})(\epsilon-i)^2}\,
\prod_{j=3}^\ell\frac{\lambda_j+\frac{i}{2}}{\lambda_j-\frac{i}{2}}.
\end{align*}
Thus we have
\begin{align*}
&\lim_{\epsilon\rightarrow 0}\,
\frac{1}{\varepsilon_{\rm deno}}
\left\{
A_1\!\left(\frac{i}{2}\right)+A_2\!\left(\frac{i}{2}\right)
\right\}\\
=&\lim_{\epsilon\rightarrow 0}\,
\frac{1}{\varepsilon_{\rm deno}}\times
\left(-i^{N}\right)
\frac{(\epsilon-i)+(i+\epsilon+c\,\epsilon^N)(1+c\,\epsilon^{N-1})}
{\epsilon\,(1+c\,\epsilon^{N-1})^2(\epsilon-i)^2}\,
\prod_{j=3}^\ell\frac{\lambda_j+\frac{i}{2}}{\lambda_j-\frac{i}{2}}\\
=&-\lim_{\epsilon\rightarrow 0}\,
\frac{(1+c\,\epsilon^{N-1})(\epsilon-i)}{i+\epsilon+c\,\epsilon^N}\times
\frac{2\epsilon+i\,c\,\epsilon^{N-1}+2c\,\epsilon^N+c^2\epsilon^{2N-1}}
{\epsilon\,(1+c\,\epsilon^{N-1})^2(\epsilon-i)^2}
\\
=&-\lim_{\epsilon\rightarrow 0}\,
\frac{2\epsilon+i\,c\,\epsilon^{N-1}+2c\,\epsilon^N+c^2\epsilon^{2N-1}}
{\epsilon\,(1+c\,\epsilon^{N-1})(\epsilon-i)(i+\epsilon+c\,\epsilon^N)}
=-2.
\end{align*}

\paragraph{6)}
Finally, for $j=3,4,\ldots,\ell$, we have
\begin{align*}
A_j\!\left(\frac{i}{2}\right)=&\,
i^{N+1}\,\frac{i+\epsilon+c\,\epsilon^N}{\epsilon+c\,\epsilon^N}\cdot
\frac{\epsilon}{\epsilon-i}\cdot
\frac{\lambda_3+\frac{i}{2}}{\lambda_3-\frac{i}{2}}\cdots
\frac{\lambda_{j-1}+\frac{i}{2}}{\lambda_{j-1}-\frac{i}{2}}\cdot
\frac{i}{(\lambda_j-\frac{i}{2})^2}\cdot
\frac{\lambda_{j+1}+\frac{i}{2}}{\lambda_{j+1}-\frac{i}{2}}\cdots
\frac{\lambda_\ell+\frac{i}{2}}{\lambda_\ell-\frac{i}{2}}.
\end{align*}
Thus we have
\begin{align*}
\frac{1}{\varepsilon_{\rm deno}}\cdot
A_j\!\left(\frac{i}{2}\right)=
-\,\frac{\lambda_j-\frac{i}{2}}{\lambda_j+\frac{i}{2}}\cdot
\frac{1}{(\lambda_j-\frac{i}{2})^2}=
-\,\frac{1}{\lambda_j^2+\frac{1}{4}}.
\end{align*}

\paragraph{7)}
To summarize, we have
\begin{align*}
\varepsilon=N-2-\sum_{j=3}^\ell\frac{1}{\lambda_j^2+\frac{1}{4}}.
\end{align*}
Therefore we obtain the following result.

\begin{theorem}\label{th:main}
Suppose that we have the following physical singular solutions to
the Bethe ansatz equations
\[
\lambda=
\left\{
\frac{i}{2},-\frac{i}{2},\lambda_3,\cdots,\lambda_\ell
\right\}.
\]
If we impose the regularization (\ref{eq:regularization}),
the corresponding non-zero Bethe vector (i.e., the Nepomechie--Wang state)
has the following energy eigenvalue:
\[
\mathcal{E}_\lambda=-J-\frac{J}{2}\sum_{j=3}^\ell\frac{1}{\lambda_j^2+\frac{1}{4}}.
\]\qed
\end{theorem}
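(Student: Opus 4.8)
The plan is to reduce everything to Theorem~\ref{th:XXX_hamiltonian_tau}, which expresses $\mathcal{H}_N$ through the logarithmic derivative of the transfer-matrix trace $\tau_N(\lambda)$ at $\lambda=\frac{i}{2}$. Since the regularized Bethe vector is an eigenvector of $\tau_N(\lambda)$ with eigenvalue $\Lambda(\lambda;\lambda_1,\ldots,\lambda_\ell)$ from (\ref{eq:eigenvalue_of_tau_on_Bethe_vectors}), the energy takes the form $\mathcal{E}=\frac{J}{2}(\varepsilon-N)$ with $\varepsilon=i\,\frac{d}{d\lambda}\log\Lambda\big|_{\lambda=i/2}$. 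The naive evaluation (\ref{eq:E}) is unavailable here because the $j=1,2$ terms of $\sum_j 1/(\lambda_j^2+\frac14)$ blow up at $\lambda_{1,2}=\pm\frac{i}{2}$; so I would instead substitute the regularized parameters (\ref{eq:regularization}) into $\Lambda$ from the outset and defer the limit $\epsilon\to0$ to the very end, computing $\varepsilon$ as a ratio of a numerator $i\,\frac{d\Lambda}{d\lambda}\big|_{\lambda=i/2}$ over a denominator $\varepsilon_{\mathrm{deno}}:=\Lambda(\frac{i}{2};\lambda_1,\ldots,\lambda_\ell)$.

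First I would compute $\varepsilon_{\mathrm{deno}}$. Only the first summand of $\Lambda$ survives at $\lambda=\frac{i}{2}$, since the second carries a factor $(\lambda-\frac{i}{2})^N$, and substituting (\ref{eq:regularization}) gives a clean closed form: an $\epsilon$-dependent prefactor times $i^N\prod_{j\ge3}\frac{\lambda_j+i/2}{\lambda_j-i/2}$. For the numerator I would again exploit that every term obtained by differentiating the second summand retains at least one factor $(\lambda-\frac{i}{2})$ and hence vanishes at $\lambda=\frac{i}{2}$ (valid since $N\ge 4$ for singular solutions). This leaves only the contribution $A_0$ from differentiating $(\lambda+\frac{i}{2})^N$, together with contributions $A_j$ from differentiating each rational factor via the identity $\frac{d}{d\lambda}\frac{\lambda-\lambda_j-i}{\lambda-\lambda_j}=\frac{i}{(\lambda_j-\lambda)^2}$.

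Upon dividing by $\varepsilon_{\mathrm{deno}}$ the computation splits into three pieces: $A_0$ contributes exactly $N$, and each regular root $A_j$ with $j\ge3$ reproduces the standard Bethe contribution $-1/(\lambda_j^2+\frac14)$. The singular pair $A_1,A_2$ is the crux, and I expect it to be the main obstacle: each of $A_1,A_2$ diverges like $\epsilon^{-1}$ as $\epsilon\to0$, so they must be placed over a common denominator \emph{before} the limit is taken. The plan is to expand the combined numerator in powers of $\epsilon$ and verify that the apparent pole cancels against the denominator, leaving a finite limit equal to $-2$; since $N\ge4$ makes the $c$-dependent terms strictly subleading, this value is in fact independent of the precise $c$ fixed in Lemma~\ref{lem:asymptotics} (that value of $c$ is needed only to ensure $\Psi_\lambda\neq0$ via Proposition~\ref{prop:NW}, not for the energy). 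Assembling the three contributions yields $\varepsilon=N-2-\sum_{j=3}^\ell 1/(\lambda_j^2+\frac14)$, whence $\mathcal{E}=\frac{J}{2}(\varepsilon-N)=-J-\frac{J}{2}\sum_{j=3}^\ell 1/(\lambda_j^2+\frac14)$, as claimed.
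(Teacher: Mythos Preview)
Your proposal is correct and follows essentially the same route as the paper: compute $\varepsilon=i\,\frac{d}{d\lambda}\log\Lambda\big|_{\lambda=i/2}$ with the regularized parameters (\ref{eq:regularization}), decompose the numerator into the pieces $A_0,A_1,\ldots,A_\ell$ via the identity $\frac{d}{d\lambda}\frac{\lambda-\lambda_j-i}{\lambda-\lambda_j}=\frac{i}{(\lambda_j-\lambda)^2}$, and show that $A_0/\varepsilon_{\mathrm{deno}}=N$, that the singular pair $(A_1+A_2)/\varepsilon_{\mathrm{deno}}\to-2$ after the individual $\epsilon^{-1}$ divergences cancel, and that each $A_j/\varepsilon_{\mathrm{deno}}$ with $j\ge3$ gives $-1/(\lambda_j^2+\tfrac14)$. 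Your additional remark that the limiting energy is independent of the particular value of $c$ (the $c$-dependent contributions being of order $\epsilon^{N-2}$ or higher) is an observation the paper's explicit computation confirms but does not state.
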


\begin{remark}
In \cite{NW} they considered another regularization
\begin{align}
\lambda^\text{naive}_1=\frac{i}{2}+\epsilon,\qquad
\lambda^\text{naive}_1=-\frac{i}{2}+\epsilon
\end{align}
which they call the naive regularization.
However, our point is that we have to use the sophisticated regularization (\ref{eq:regularization})
for the computation of the energy eigenvalues.
Recall the form of the Schr\"{o}dinger equation
$\mathcal{H}_N\Psi_\lambda=\mathcal{E}_\lambda\Psi_\lambda$.
As we can see in the Introduction to \cite{NW}, the naive regularization
provides incorrect answers to the eigenvectors $\Psi_\lambda$.
In this and the next sections we demonstrate that the regularization (\ref{eq:regularization})
provides the correct energy eigenvalue $\mathcal{E}_\lambda$
corresponding to the correct eigenvector $\Psi_\lambda$,
thus completing the solutions to the Schr\"{o}dinger equation in the present case.
\qed
\end{remark}

\section{Examples}

\subsection{Rigged configurations}\label{sec:rc}
In our previous paper \cite{KS}, we pointed out that
the rigged configurations (RC for short) provide a good
parametrization for the combination of both regular solutions
and physical singular solutions to the Bethe ansatz equations.
In particular, we pointed out that the rigged configurations are essential
for the description of the physical singular solutions and,
as the result, we proposed conjectures on the total numbers of
various classes of solutions to the Bethe ansatz equations.

In the spin $1/2$ XXX model case, a rigged configuration is
comprised of a Young diagram $\nu$ (called a configuration) and
integers (called riggings) attached to each row of $\nu$.
To be more specific, let $\nu=(\nu_1,\nu_2,\ldots,\nu_g)$
and let $J_i$ ($1\leq i\leq g$) be the integer
attached to the length $\nu_i$ row of $\nu$.
Then the set of rigged configurations is comprised of all $\nu$
and $J_i$ ($1\leq i\leq g$) satisfying the following conditions.
Suppose that the length of the state is $N$.
Then the total number of the boxes of $\nu$ must not
exceed $N/2$.
We introduce the following integers which we call the vacancy numbers:
\begin{align}
P_k(\nu)=N-2\sum^g_{i=1}\min(k,\nu_i)\qquad (k\in\mathbb{Z}_{>0}).
\end{align}
Note that the second term is the number of boxes within the left
$k$ columns of $\nu$.
Suppose that the rigging $J_i$ is attached to a length $k$ row.
Then it must satisfy $0\leq J_i\leq P_k(\nu)$.
Note that, as rigged configurations, we do not make distinction
if the difference is only a reordering of riggings for the rows of the same length.

Below we provide labels of the solutions to the Bethe ansatz equations
in terms of the rigged configurations.
We refer the readers to \cite[Section 3.1]{KS} for the description
of the correspondence between the rigged configurations and the solutions
to the Bethe ansatz equations.

\subsection{$N=4$ case}
\paragraph{Regular solutions to the Bethe ansatz equations.}
We refer the readers to \cite[Example 2]{KS} for additional information on this case.
Since we consider regular solutions, we use (\ref{eq:E}) in order to determine
the energy eigenvalue $\mathcal{E}$.

\begin{itemize}
\item
The case $\ell=0$.
This case corresponds to the representation $\mathbf{5}$
which is generated by the vacuum vector $|0\rangle_4$.
Then we have $\mathcal{E}=0$.
\item
The case $\ell=1$.
The corresponding representation is $\mathbf{3}$.
Then we have the following three solutions:
\begin{center}
\begin{tabular}{lll}
\hline\hline
$\lambda_1$ & $\mathcal{E}$ & RC\\
\hline
$\frac{1}{2}$ & $-J$&
{\unitlength 12pt
\begin{picture}(3,1)
\put(0.2,0.1){2}
\put(1,0){\Yboxdim12pt\yng(1)}
\put(2.3,0.1){2}
\end{picture}
}\rule{0pt}{15pt}\\
0 & $-2J$&
{\unitlength 12pt
\begin{picture}(3,1)
\put(0.2,0.1){2}
\put(1,0){\Yboxdim12pt\yng(1)}
\put(2.3,0.1){1}
\end{picture}
}\rule{0pt}{15pt}\\
$-\frac{1}{2}$ & $-J$&
{\unitlength 12pt
\begin{picture}(3,1)
\put(0.2,0.1){2}
\put(1,0){\Yboxdim12pt\yng(1)}
\put(2.3,0.1){0}
\end{picture}
}\rule{0pt}{15pt}\\
\hline
\end{tabular}
\end{center}
Here, in order to display the rigged configurations,
we put the vacancy numbers (resp. riggings)
on the left (resp. right) of the corresponding rows of $\nu$.
\item
The case $\ell=2$.
The corresponding representation is $\mathbf{1}$.
Then we have only one regular solution:
\begin{center}
\begin{tabular}{lll}
\hline\hline
$\lambda_1, \lambda_2$ & $\mathcal{E}$ &RC\\
\hline
$\displaystyle\frac{1}{\sqrt{12}}, -\frac{1}{\sqrt{12}}$ & $-3J$&
{\unitlength 12pt
\begin{picture}(4,1)(0,0.6)
\put(0.2,1.1){0}
\put(0.2,0.1){0}
\put(1,0){\Yboxdim12pt\yng(1,1)}
\put(2.4,1.1){0}
\put(2.4,0.1){0}
\end{picture}
}\rule{0pt}{21pt}
\\\hline
\end{tabular}
\end{center}
\end{itemize}

To summarize, we have the following energy eigenvalues and their multiplicities
\[
\left\{0^5,(-J)^6,(-2J)^3,(-3J)^1\right\}
\]
from the regular solutions.
Here we describe the multiplicities of the energy eigenvalues
as in the following notation:
\[
\{\text{eigenvalue}^\text{multiplicity},\ldots\}.
\]

\paragraph{Direct diagonalization.}
From the exact diagonalization of $\mathcal{H}_4$,
we obtain the following multiplicities for the energy eigenvalues:
\[
\left\{0^5,(-J)^7,(-2J)^3,(-3J)^1\right\}.
\]
In conclusion, one eigenstate of eigenvalue $-J$ is missing from the list of regular solutions.

\paragraph{Nepomechie--Wang state.}
In the case of $\ell=2$ we have the following physical singular solution:
\begin{center}
\begin{tabular}{ll}
\hline\hline
$\lambda_1, \lambda_2$ & RC\\
\hline
$\displaystyle\frac{i}{2},-\frac{i}{2}$ &
{\unitlength 12pt
\begin{picture}(4,1)
\put(0.2,0.1){0}
\put(1,0){\Yboxdim12pt\yng(2)}
\put(3.3,0.1){0}
\end{picture}
}\rule{0pt}{18pt}
\\\hline
\end{tabular}
\end{center}
According to Theorem \ref{th:main}, the corresponding energy eigenvalue is $\mathcal{E}=-J$.
Moreover the corresponding representation is $\mathbf{1}$ since $\ell=2$.
This result is compatible with the above observations.

\subsection{$N=6$ case}
\paragraph{Regular solutions to the Bethe ansatz equations.}
We refer the readers to \cite[Example 11]{KS} for additional information on this case.
\begin{itemize}
\item
The case $\ell=0$.
This case corresponds to the representation $\mathbf{7}$
which is generated by the vacuum vector $|0\rangle_6$.
Then we have $\mathcal{E}=0$.
\item
The case $\ell=1$.
The corresponding representation is $\mathbf{5}$.
Then we have the following five solutions:
\begin{center}
\begin{tabular}{lll}
\hline\hline
$\lambda_1$ & $\mathcal{E}$ & RC\\
\hline
$0.866025$ & $-0.5J$&
{\unitlength 12pt
\begin{picture}(3,1)
\put(0.2,0.1){4}
\put(1,0){\Yboxdim12pt\yng(1)}
\put(2.3,0.1){4}
\end{picture}
}\rule{0pt}{15pt}\\
$0.288675$ & $-1.5J$&
{\unitlength 12pt
\begin{picture}(3,1)
\put(0.2,0.1){4}
\put(1,0){\Yboxdim12pt\yng(1)}
\put(2.3,0.1){3}
\end{picture}
}\rule{0pt}{15pt}\\
0 & $-2J$&
{\unitlength 12pt
\begin{picture}(3,1)
\put(0.2,0.1){4}
\put(1,0){\Yboxdim12pt\yng(1)}
\put(2.3,0.1){2}
\end{picture}
}\rule{0pt}{15pt}\\
$-0.288675$ & $-1.5J$&
{\unitlength 12pt
\begin{picture}(3,1)
\put(0.2,0.1){4}
\put(1,0){\Yboxdim12pt\yng(1)}
\put(2.3,0.1){1}
\end{picture}
}\rule{0pt}{15pt}\\
$-0.866025$ & $-0.5J$&
{\unitlength 12pt
\begin{picture}(3,1)
\put(0.2,0.1){4}
\put(1,0){\Yboxdim12pt\yng(1)}
\put(2.3,0.1){0}
\end{picture}
}\rule{0pt}{15pt}\\
\hline
\end{tabular}
\end{center}
\item
The case $\ell=2$.
The corresponding representation is $\mathbf{3}$.
Then we have the following eight regular solutions:\footnote{In order to
make the correspondence between the final six solutions
and rigged configurations in a clearly visible form, we plot solutions (labeled by $1,2,\ldots,6$)
on the complex plane.
\begin{center}
\unitlength 12pt
\begin{picture}(7,6)
\put(0.2,5.2){$1$}
\put(0,3){\vector(1,0){6}}
\put(3,0){\vector(0,1){6}}
\put(5.06,3){\circle*{0.3}}
\put(0.94,3){\circle*{0.3}}
\multiput(1,0.21)(0,0.2){29}{\circle*{0.07}}
\multiput(2,0.21)(0,0.2){29}{\circle*{0.07}}
\multiput(4,0.21)(0,0.2){29}{\circle*{0.07}}
\multiput(5,0.21)(0,0.2){29}{\circle*{0.07}}
\multiput(0.2,1.01)(0.2,0){29}{\circle*{0.07}}
\multiput(0.2,2.01)(0.2,0){29}{\circle*{0.07}}
\multiput(0.2,4.01)(0.2,0){29}{\circle*{0.07}}
\multiput(0.2,5.01)(0.2,0){29}{\circle*{0.07}}
\end{picture}
\begin{picture}(7,6)
\put(0.2,5.2){$2$}
\put(0,3){\vector(1,0){6}}
\put(3,0){\vector(0,1){6}}
\put(4.89,3){\circle*{0.3}}
\put(2.4,3){\circle*{0.3}}
\multiput(1,0.21)(0,0.2){29}{\circle*{0.07}}
\multiput(2,0.21)(0,0.2){29}{\circle*{0.07}}
\multiput(4,0.21)(0,0.2){29}{\circle*{0.07}}
\multiput(5,0.21)(0,0.2){29}{\circle*{0.07}}
\multiput(0.2,1.01)(0.2,0){29}{\circle*{0.07}}
\multiput(0.2,2.01)(0.2,0){29}{\circle*{0.07}}
\multiput(0.2,4.01)(0.2,0){29}{\circle*{0.07}}
\multiput(0.2,5.01)(0.2,0){29}{\circle*{0.07}}
\end{picture}
\begin{picture}(7,6)
\put(0.2,5.2){$3$}
\put(0,3){\vector(1,0){6}}
\put(3,0){\vector(0,1){6}}
\put(4.74,3){\circle*{0.3}}
\put(2.71,3){\circle*{0.3}}
\multiput(1,0.21)(0,0.2){29}{\circle*{0.07}}
\multiput(2,0.21)(0,0.2){29}{\circle*{0.07}}
\multiput(4,0.21)(0,0.2){29}{\circle*{0.07}}
\multiput(5,0.21)(0,0.2){29}{\circle*{0.07}}
\multiput(0.2,1.01)(0.2,0){29}{\circle*{0.07}}
\multiput(0.2,2.01)(0.2,0){29}{\circle*{0.07}}
\multiput(0.2,4.01)(0.2,0){29}{\circle*{0.07}}
\multiput(0.2,5.01)(0.2,0){29}{\circle*{0.07}}
\end{picture}
\end{center}

\begin{center}
\unitlength 12pt
\begin{picture}(7,6)
\put(0.2,5.2){$4$}
\put(0,3){\vector(1,0){6}}
\put(3,0){\vector(0,1){6}}
\put(3.59,3){\circle*{0.3}}
\put(1.10,3){\circle*{0.3}}
\multiput(1,0.21)(0,0.2){29}{\circle*{0.07}}
\multiput(2,0.21)(0,0.2){29}{\circle*{0.07}}
\multiput(4,0.21)(0,0.2){29}{\circle*{0.07}}
\multiput(5,0.21)(0,0.2){29}{\circle*{0.07}}
\multiput(0.2,1.01)(0.2,0){29}{\circle*{0.07}}
\multiput(0.2,2.01)(0.2,0){29}{\circle*{0.07}}
\multiput(0.2,4.01)(0.2,0){29}{\circle*{0.07}}
\multiput(0.2,5.01)(0.2,0){29}{\circle*{0.07}}
\end{picture}
\begin{picture}(7,6)
\put(0.2,5.2){$5$}
\put(0,3){\vector(1,0){6}}
\put(3,0){\vector(0,1){6}}
\put(3.48,3){\circle*{0.3}}
\put(2.51,3){\circle*{0.3}}
\multiput(1,0.21)(0,0.2){29}{\circle*{0.07}}
\multiput(2,0.21)(0,0.2){29}{\circle*{0.07}}
\multiput(4,0.21)(0,0.2){29}{\circle*{0.07}}
\multiput(5,0.21)(0,0.2){29}{\circle*{0.07}}
\multiput(0.2,1.01)(0.2,0){29}{\circle*{0.07}}
\multiput(0.2,2.01)(0.2,0){29}{\circle*{0.07}}
\multiput(0.2,4.01)(0.2,0){29}{\circle*{0.07}}
\multiput(0.2,5.01)(0.2,0){29}{\circle*{0.07}}
\end{picture}
\begin{picture}(7,6)
\put(0.2,5.2){$6$}
\put(0,3){\vector(1,0){6}}
\put(3,0){\vector(0,1){6}}
\put(3.28,3){\circle*{0.3}}
\put(1.25,3){\circle*{0.3}}
\multiput(1,0.21)(0,0.2){29}{\circle*{0.07}}
\multiput(2,0.21)(0,0.2){29}{\circle*{0.07}}
\multiput(4,0.21)(0,0.2){29}{\circle*{0.07}}
\multiput(5,0.21)(0,0.2){29}{\circle*{0.07}}
\multiput(0.2,1.01)(0.2,0){29}{\circle*{0.07}}
\multiput(0.2,2.01)(0.2,0){29}{\circle*{0.07}}
\multiput(0.2,4.01)(0.2,0){29}{\circle*{0.07}}
\multiput(0.2,5.01)(0.2,0){29}{\circle*{0.07}}
\end{picture}
\end{center}
Here the spacing of the dotted lines is $0.33$ and solutions
are arranged in the descending order of $\lambda_1$.
According to \cite{KS}, we assume that the upper riggings specify the positions of $\lambda_1$
(the larger rigging corresponds to the larger value of $\lambda_1$).
Next we specify $\lambda_2$ in the same manner.
In fact, this kind of a clear relation is a typical behavior.
See \cite[Section 4.1]{KS} for another example.
}
\begin{center}
\begin{tabular}{llll}
\hline\hline
label&$\lambda_1, \lambda_2$ & $\mathcal{E}$ &RC\\
\hline
&$0.554592\pm 0.512465i$ & $-0.7192J$&
{\unitlength 12pt
\begin{picture}(4,1)
\put(0.2,0.1){2}
\put(1,0){\Yboxdim12pt\yng(2)}
\put(3.3,0.1){2}
\end{picture}
}\rule{0pt}{15pt}
\\
&$-0.554592\pm 0.512465i$ & $-0.7192J$&
{\unitlength 12pt
\begin{picture}(4,1)
\put(0.2,0.1){2}
\put(1,0){\Yboxdim12pt\yng(2)}
\put(3.3,0.1){0}
\end{picture}
}\rule{0pt}{15pt}
\\
\hline
1&$0.688190,-0.688190$ & $-1.3819J$&
{\unitlength 12pt
\begin{picture}(4,1)
\put(0.2,1.1){2}
\put(0.2,0.1){2}
\put(1,0){\Yboxdim12pt\yng(1,1)}
\put(2.4,1.1){2}
\put(2.4,0.1){0}
\end{picture}
}\rule{0pt}{28pt}
\\
2&$0.631084,-0.198071$ & $-2.5J$&
{\unitlength 12pt
\begin{picture}(4,1)
\put(0.2,1.1){2}
\put(0.2,0.1){2}
\put(1,0){\Yboxdim12pt\yng(1,1)}
\put(2.4,1.1){2}
\put(2.4,0.1){1}
\end{picture}
}\rule{0pt}{28pt}
\\
3&$0.582004,-0.094167$ & $-2.7807J$&
{\unitlength 12pt
\begin{picture}(4,1)
\put(0.2,1.1){2}
\put(0.2,0.1){2}
\put(1,0){\Yboxdim12pt\yng(1,1)}
\put(2.4,1.1){2}
\put(2.4,0.1){2}
\end{picture}
}\rule{0pt}{28pt}
\\
4&$0.198071,-0.631084$ & $-2.5J$&
{\unitlength 12pt
\begin{picture}(4,1)
\put(0.2,1.1){2}
\put(0.2,0.1){2}
\put(1,0){\Yboxdim12pt\yng(1,1)}
\put(2.4,1.1){1}
\put(2.4,0.1){0}
\end{picture}
}\rule{0pt}{28pt}
\\
5&$0.162459,-0.162459$ & $-3.6180J$&
{\unitlength 12pt
\begin{picture}(4,1)
\put(0.2,1.1){2}
\put(0.2,0.1){2}
\put(1,0){\Yboxdim12pt\yng(1,1)}
\put(2.4,1.1){1}
\put(2.4,0.1){1}
\end{picture}
}\rule{0pt}{28pt}
\\
6&$0.094167,-0.582004$ & $-2.7807J$&
{\unitlength 12pt
\begin{picture}(4,1)
\put(0.2,1.1){2}
\put(0.2,0.1){2}
\put(1,0){\Yboxdim12pt\yng(1,1)}
\put(2.4,1.1){0}
\put(2.4,0.1){0}
\end{picture}
}\rule{0pt}{28pt}
\\
\hline
\end{tabular}
\end{center}
\item
The case $\ell=3$.
The corresponding representation is $\mathbf{1}$.
Then we have the following four regular solutions:
\begin{center}
\begin{tabular}{lll}
\hline\hline
$\lambda_1, \lambda_2, \lambda_3$ & $\mathcal{E}$ &RC\\
\hline
$0,\pm 1.008757i$& $-0.6972J$ &
{\unitlength 12pt
\begin{picture}(5,1.5)(-0.7,0)
\put(-0.7,0.1){$0$}
\put(0.1,0){\Yboxdim12pt\yng(3)}
\put(3.5,0.1){$0$}
\end{picture}}\\
$0.235900\pm 0.500280i, -0.471800$& $-2J$ &
{\unitlength 12pt
\begin{picture}(2.5,2.5)(-0.7,0)
\put(-0.7,1.1){$0$}
\put(-0.7,0.1){$2$}
\put(0.1,0){\Yboxdim12pt\yng(2,1)}
\put(2.4,1.1){$0$}
\put(1.3,0.1){$0$}
\end{picture}}\\
$0.471800,-0.235900\pm 0.500280i$& $-2J$ &
{\unitlength 12pt
\begin{picture}(2.5,2.5)(-0.7,0)
\put(-0.7,1.1){$0$}
\put(-0.7,0.1){$2$}
\put(0.1,0){\Yboxdim12pt\yng(2,1)}
\put(2.4,1.1){$0$}
\put(1.3,0.1){$2$}
\end{picture}}\\
$0,\pm 0.429253$& $-4.3027$ &
{\unitlength 12pt
\begin{picture}(1.5,3.5)(-0.7,0)
\put(-0.7,2.2){$0$}
\put(-0.7,1.1){$0$}
\put(-0.7,0.1){$0$}
\put(0.1,0){\Yboxdim12pt\yng(1,1,1)}
\put(1.4,2.2){$0$}
\put(1.4,1.1){$0$}
\put(1.4,0.1){$0$}
\end{picture}}\\\hline
\end{tabular}
\end{center}
\end{itemize}

To summarize, we have the following energy eigenvalues and their multiplicities
\begin{align*}
&\left\{
0^7, (-0.5J)^{10}, (-0.6972J)^1, (-0.7192J)^6, (-1.3819J)^3, (-1.5J)^{10},\right.\\
&\left. (-2J)^7, (-2.5J)^6, (-2.7807J)^6, (-3.6180J)^3, (-4.3027J)^1
\right\}
\end{align*}
from the regular solutions.

\paragraph{Direct diagonalization.}
From the exact diagonalization of $\mathcal{H}_6$,
we obtain the following multiplicities for the energy eigenvalues:
\begin{align*}
&\left\{0^7,
\left(-\frac{1}{2}J\right)^{10},
\left(-\frac{5-\sqrt{13}}{2} J\right)^1,
\left(-\frac{7-\sqrt{17}}{4}J\right)^6,
(-J)^3,
\left(-\frac{5-\sqrt{5}}{2}J\right)^3,
\left(-\frac{3}{2}J\right)^{10},
\right.\\
&\left.
(-2J)^7,
\left(-\frac{5}{2}J\right)^6,
\left(-\frac{7+\sqrt{17}}{4}J\right)^6,
(-3J)^1,
\left(-\frac{5+\sqrt{5}}{2}J\right)^3,
\left(-\frac{5+\sqrt{13}}{2}J\right)^1
\right\},
\end{align*}
or, in order to facilitate the comparison, their numerical values are
\begin{align*}
&\left\{
0^7, (-0.5J)^{10}, (-0.6972J)^1, (-0.7192J)^6, (-J)^3, (-1.3819J)^3, (-1.5J)^{10},\right.\\
&\left. (-2J)^7, (-2.5J)^6, (-2.7807J)^6, (-3J)^1, (-3.6180J)^3, (-4.3027J)^1
\right\}.
\end{align*}
In conclusion, the following energy eigenvalues (with multiplicities)
are missing from the list of the regular solutions:
\[
\left\{(-J)^3,(-3J)^1\right\}.
\]

\paragraph{Nepomechie--Wang state.}
In the case of $\ell=2$ we have the following physical singular solution
which generates the representation $\mathbf{3}$:
\begin{center}
\begin{tabular}{ll}
\hline\hline
$\lambda_1, \lambda_2$ & RC\\
\hline
$\displaystyle\frac{i}{2},-\frac{i}{2}$ &
{\unitlength 12pt
\begin{picture}(4,1)
\put(0.2,0.1){2}
\put(1,0){\Yboxdim12pt\yng(2)}
\put(3.3,0.1){1}
\end{picture}
}\rule{0pt}{18pt}
\\\hline
\end{tabular}
\end{center}
According to Theorem \ref{th:main}, the corresponding energy eigenvalue is $\mathcal{E}=-J$.

In the case of $\ell=3$, we have the following physical singular solution
which generates the representation $\mathbf{1}$:
\begin{center}
\begin{tabular}{ll}
\hline\hline
$\lambda_1, \lambda_2, \lambda_3$ & RC\\
\hline
$0, \displaystyle\frac{i}{2},-\frac{i}{2}$ &
{\unitlength 12pt
\begin{picture}(4,1)(0,0.6)
\put(0.2,1.1){0}
\put(0.2,0.1){2}
\put(1,0){\Yboxdim12pt\yng(2,1)}
\put(3.3,1.1){0}
\put(2.3,0.1){1}
\end{picture}
}\rule{0pt}{20pt}
\\\hline
\end{tabular}
\end{center}
According to Theorem \ref{th:main}, the corresponding energy eigenvalue is $\mathcal{E}=-3J$.

Thus we have a perfect agreement with the above observations.

\section{Conclusion}
\paragraph{(a)} We compute the energy of the Nepomechie--Wang eigenstates
which correspond to the physical singular solutions to the
Bethe ansatz equations (Theorem \ref{th:main}).
Recall that in our previous paper \cite{KS}, we pointed out that
the set of solutions to the Bethe ansatz equations which are either
regular or physical singular in the sense of \cite{NW} has a
deep mathematical structure called the rigged configurations.
Such property is apparent even for smaller values of the system size $N$.
Therefore the present result provides yet another supporting evidence
for the usefulness of Nepomechie--Wang's results.

We remark that in paper \cite{EKS}, the authors find examples
where some of the string type solutions are replaced by pairs of real solutions.
Therefore we expect that the correspondence between the rigged configurations
and the set of regular and physical singular solutions to the Bethe ansatz equations
requires extra modifications when $N$ is large.

\paragraph{(b)} We expect interesting connections between the physical singular solutions
to the spin $\frac{1}{2}$ isotropic Heisenberg model and anomaly dimensions
of certain generic gauge invariant operators in AdS$\times S^5$ theory
studied in \cite{BMSZ}.

\paragraph{(c)} In \cite{HNS2}, the authors considered the spin-$s$ generalized Heisenberg chain.
According to their numerical data, we propose the following conjecture.

\begin{conjecture}
(1) If $2s\equiv 1\pmod 2$, then the total number of states consists of either
regular solutions or physical singular solutions (i.e., there are no strange solutions,
i.e., solutions to the Bethe ansatz equations having some components equal,
and therefore violate the Pauli principle), except possibly ``sporadic physical states"
to the Bethe ansatz equations\footnote{Indeed, the Bethe ansatz equations for
spin $s$ Heisenberg chain have the following form:
\begin{align}
\left(\frac{\lambda_k+is}{\lambda_k-is}\right)^N=
\prod^\ell_{j=1\atop j\neq k}\frac{\lambda_k-\lambda_j+i}{\lambda_k-\lambda_j-i},\qquad
(k=1,\ldots,\ell).
\end{align}
Therefore, if $\lambda=\lambda^{(\ell)}_\pm$, then
$\left(\frac{\pm 1+i}{\pm 1-i}\right)^N=(-1)^{\ell-1}$, or equivalently,
$(\mp i)^N=(-1)^{\ell-1}$, so that, $N\equiv 2\ell-2\pmod 4$;
In the case $\lambda=\lambda^{(\ell)}_0$, the Bethe ansatz equations take the form
$(-1)^N=(-1)^{\ell-1}$, i.e., $N\equiv\ell-1\pmod 2$.}
\begin{align}
\lambda^{(\ell)}_0&=(\underbrace{0,\ldots,0}_\ell),\qquad\mbox{if } N\equiv\ell-1\pmod 2,\\
\lambda^{(\ell)}_\pm&=(\underbrace{\pm s,\ldots,\pm s}_\ell),\qquad\mbox{if } N\equiv 2\ell-2\pmod 4.
\end{align}
As it has been shown in \cite[Table 5]{HNS2}, the sporadic physical solutions really exist,
namely, $\lambda^{(2)}_0$ for $N=3$, and $\lambda^{(2)}_\pm$ for $N=6$.

(2) If $2s\equiv 0\pmod 2$, then if $\ell\equiv 1\pmod 2$, then total number of
states is a union of regular solutions and physical singular solutions, and if
$\ell\equiv 0\pmod 2$, then the total number of solutions is a union of
regular solutions and strange solutions.

(3) Let $\mathcal{N}_{strange}(N,\ell)$ (resp. $\mathcal{N}_{sp}(N,\ell)$)
be the total number of strange (resp. physical singular) solutions
corresponding to $N$ and $\ell$.
Then, if $2s\equiv 0\pmod 2$, we have \begin{align}
\mathcal{N}_{strange}(2N,2\ell)=\mathcal{N}_{sp}(2N-1,2\ell-1).
\end{align}
\qed
\end{conjecture}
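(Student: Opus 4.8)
The plan is to reduce all three statements to combinatorial identities among rigged configurations, using the completeness of the Bethe ansatz as the organizing principle. Write $\mathcal{N}_{tot}(N,\ell)$ for the number of highest weight vectors of weight $Ns-\ell$ in $(\mathbb{C}^{2s+1})^{\otimes N}$, and $\mathcal{N}_{reg}(N,\ell)$ for the number of regular solutions; by the fermionic formula $\mathcal{N}_{tot}$ equals the number of rigged configurations of the appropriate content, and the completeness statement to be established is $\mathcal{N}_{tot}=\mathcal{N}_{reg}+\mathcal{N}_{sp}+\mathcal{N}_{strange}$ once each solution type is matched with a disjoint family of rigged configurations as in Section \ref{sec:rc}. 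First I would extend the dictionary of \cite{KS} from spin $1/2$ to arbitrary spin $s$: a physical singular solution carries an exact string through the singular points $\pm is$ (the spin-$s$ analogue of the pair $\{i/2,-i/2\}$ of (\ref{eq:sing2})), while a strange solution is one whose Bethe roots fail to be pairwise distinct. The first task is therefore to pin down exactly which rigged configurations produce coincident roots and which produce the $\pm is$ string.

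For parts (1) and (2), the mechanism is a parity obstruction coming straight from the spin-$s$ Bethe equations of the footnote. If a solution has two equal rapidities $\lambda_a=\lambda_b$, then the factor $\frac{\lambda_a-\lambda_b+i}{\lambda_a-\lambda_b-i}=-1$ appears on the right-hand side, and comparing the equations for $k=a$ and $k=b$ forces the quantization onto the constant loci $\lambda\equiv 0$ and $\lambda\equiv\pm s$ isolated in the footnote. I would first show that \emph{away} from these loci a repeated root makes the algebraic Bethe vector vanish identically (the Pauli principle for the operator $B_N(\lambda)$), so such strange solutions are unphysical and contribute nothing to $\mathcal{N}_{tot}$. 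The surviving coincident-root solutions live exactly on $\lambda^{(\ell)}_0$ and $\lambda^{(\ell)}_\pm$, cut out by the conditions $N\equiv\ell-1\pmod 2$ and $N\equiv 2\ell-2\pmod 4$. For half-integer $s$ (so $2s$ odd) one then checks, via the vacancy-number formula $P_k(\nu)=N-2\sum_i\min(k,\nu_i)$ of Section \ref{sec:rc}, that $\mathcal{N}_{reg}+\mathcal{N}_{sp}$ already saturates $\mathcal{N}_{tot}$ except on these isolated loci, yielding statement (1). For integer $s$ (so $2s$ even) the same bookkeeping produces the $\ell$-parity dichotomy of statement (2): the relevant singular string through $\pm is$ now occupies an even number of Bethe roots, so it can be completed into an admissible configuration only when $\ell$ is odd, forcing the physical-singular family for odd $\ell$ and the genuine strange family for even $\ell$.

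Statement (3) I would prove by an explicit bijection at the level of rigged configurations. The shift $(2N,2\ell)\mapsto(2N-1,2\ell-1)$ suggests deleting the single root responsible for the coincidence in a strange configuration and compensating by shrinking the chain length by one: concretely, I would remove the box carrying the ``doubled'' rigging and rebalance the vacancy numbers $P_k(\nu)$ so that the resulting object is precisely a physical singular configuration for $(2N-1,2\ell-1)$. This is internally consistent with (2), since at $(2N-1,2\ell-1)$ the magnon number $2\ell-1$ is odd and hence carries physical singular solutions. Verifying that the map is well defined and invertible — that it sends admissible riggings to admissible riggings and preserves the singular-string constraint — is the combinatorial heart of (3).

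The main obstacle is analytic rather than combinatorial: showing that each configuration flagged as ``physical'' genuinely yields a non-zero eigenvector requires a spin-$s$ generalization of Lemma \ref{lem:asymptotics} and Proposition \ref{prop:NW}. For spin $1/2$ the regularization (\ref{eq:regularization}) rescues a single pair $\{i/2,-i/2\}$, but for higher spin one must regularize an entire exact string straddling $\pm is$, and for the sporadic states $\lambda^{(\ell)}_0,\lambda^{(\ell)}_\pm$ an $\ell$-fold degeneracy collapsed to one point. Controlling the cancellations among the residues $\Lambda_k$ of (\ref{eq:Lambda_k_for_Bethe_vectors}) in this degenerate regime — and proving that the resulting limit exists and is non-zero precisely under the stated parity conditions — is where I expect the real difficulty to lie; until that analytic input is supplied, the completeness count above establishes only the numerology, and not the physical reality, of the three families.
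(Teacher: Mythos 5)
You are being asked to prove something the paper itself does not prove: this statement is labelled a \emph{conjecture}, proposed in the concluding section on the basis of the numerical data of \cite{HNS2} (together with the counting conjectures of \cite{KS}). The only deductive content in the paper is the footnote's elementary check that the sporadic candidates $\lambda^{(\ell)}_0$ and $\lambda^{(\ell)}_\pm$ can satisfy the spin-$s$ Bethe ansatz equations only under the stated parity conditions $N\equiv\ell-1\pmod 2$ and $N\equiv 2\ell-2\pmod 4$. So there is no paper proof to compare against, and your proposal must stand on its own as a proof attempt. As such it does not close the conjecture, and you concede this yourself: the analytic core — a spin-$s$ analogue of Lemma \ref{lem:asymptotics} and Proposition \ref{prop:NW} showing that each rigged configuration flagged as physical actually yields a well-defined non-zero eigenvector after regularizing a full string through $\pm is$ — is left unsupplied, and without it none of (1), (2), (3) is established even granting your combinatorial bookkeeping.

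Beyond incompleteness, one step is concretely wrong. You claim that when $\lambda_a=\lambda_b$, comparing the Bethe equations for $k=a$ and $k=b$ ``forces the quantization onto the constant loci'' $\lambda^{(\ell)}_0$, $\lambda^{(\ell)}_\pm$, so that coincident-root solutions survive only there. But if $\lambda_a=\lambda_b$ those two equations are \emph{identical}, so the comparison yields no constraint whatsoever; the footnote's parity computation applies only to the totally degenerate tuples in which \emph{all} $\ell$ roots coincide, and classifies nothing about solutions with merely some components equal. Worse, your conclusion that strange solutions away from these loci ``are unphysical and contribute nothing to $\mathcal{N}_{tot}$'' contradicts part (2) of the very statement you are proving: for integer $s$ and even $\ell$ the conjecture asserts the state count is completed by strange solutions, and in the data of \cite{HNS2} such repeated-root solutions are physical and generically \emph{not} of the sporadic form. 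The ``Pauli principle'' vanishing of the Bethe vector under repeated roots is a spin-$\frac{1}{2}$ phenomenon; its failure for higher spin is exactly the point of the conjecture, so assuming it makes your argument for (2) self-defeating. Finally, for (3) the map ``delete the doubled box and rebalance the vacancy numbers'' is named but not constructed: you would have to define it on configurations and riggings, verify admissibility $0\leq J_i\leq P_k(\nu)$ on both sides, and prove it is a bijection — which is the entire content of $\mathcal{N}_{strange}(2N,2\ell)=\mathcal{N}_{sp}(2N-1,2\ell-1)$, a statement that remains open (note also that the closest known count, Conjecture 14 of \cite{KS}, is itself conjectural, and \cite{DG} reports discrepancies with \cite{HNS1,KS} still under study).
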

An explicit, but still conjectural formula for the number
$\mathcal{N}_{sp}(2N,2\ell-1)$ has been stated in \cite{KS},
Conjecture 14 ({\bf B-b}), and we expect that the same conjecture
is valid for the numbers $\mathcal{N}_{sp}(2N-1,2\ell-1)$.

We would like to point out that this conjecture explains another difference
between integer spin chains and half-integer spin chains which attracts
great attention in the Haldane gap theory \cite{H}.

\paragraph{(d)}
Recently Deguchi--Giri \cite{DG} reported several results concerning the relationship
between the solutions to the Bethe ansatz equations and the rigged configurations.
Here we summarize some of their results which have close relation with our work.
(i) They compare our method with the so-called Bethe--Takahashi quantum number \cite{Bethe,T}
and confirmed the agreement.
(ii) They used the energy eigenvalues for the singular strings to distinguish several seemingly
similar solutions which correspond to different rigged configurations (see Figs. 1 and 2 of \cite{DG}).
(iii) They estimated the number of physical singular solutions from a different point of view.
They confirmed our previous conjecture in \cite{KS} for the even $\ell$ case.
However their result (Table VII of \cite{DG}) has discrepancy with the results of
\cite{HNS1,KS} which needs further study.


\begin{thebibliography}{99}
\bibitem[AV]{AV}
L.~V.~Avdeev and A.~A.~Vladimirov,
{\it Exceptional solutions to the Bethe ansatz equations},
Theor. Math. Phys. {\bf 69} (1986) 1071--1079.

\bibitem[BMSZ]{BMSZ}
N.~Beisert, J.~A.~Minahan, M.~Staudacher and K.~Zarembo,
{\it Stringing spins and spinning strings},
JHEP {\bf 09} (2003) 010 (27pp)

\bibitem[B]{Bethe}
H.~Bethe,
{\it Zur theorie der metalle},
Zeitschrift f\"{u}r Physik {\bf 71} (1931) 205--226.

\bibitem[DG]{DG} T.~Deguchi and P.~R.~Giri, 
\textit{Non self-conjugate strings, singular strings and rigged configurations in the Heisenberg model},
arXiv:1408.7030

\bibitem[EKS]{EKS}
F.~H.~L.~Essler, V.~E.~Korepin and K.~Schoutens,
{\it Fine structure of the Bethe ansatz for the
spin-$\frac{1}{2}$ Heisenberg $XXX$ model},
J. Phys. A: Math. Gen. {\bf 25} (1992) 4115--4126.

\bibitem[FT]{FT}
L.~D.~Faddeev and L.~A.~Takhtajan,
{\it Spectrum and scattering of excitations in
the one-dimensional isotropic Heisenberg model},
J. Soviet Math. {\bf 24} (1984) 241--267.

\bibitem[H]{H}
F.~D.~M.~Haldane,
{\it Nonlinear field theory of large-spin Heisenberg antiferromagnets:
Semiclassically quantized solitons of the one-dimensional easy-axis N\'{e}el state},
Phys. Rev. Lett., {\bf 50} (1983) 1153--1156.

\bibitem[HNS1]{HNS1}
W.~Hao, R.~I.~Nepomechie and A.~I.~Sommese,
{\it Completeness of solutions of Bethe's equations},
Phys. Rev. E {\bf 88} (2013) 052113 (8pp plus supplemental material).

\bibitem[HNS2]{HNS2}
W.~ Hao, R.~I.~Nepomechie and  A.~J.~Sommese, 
{\it Singular solutions, repeated roots and completeness for higher-spin 
chains},  J. Stat. Mech. (2014) P03024 (20pp)

\bibitem[KS]{KS}
A.~N.~Kirillov and R.~Sakamoto,
{\it Singular solutions to the Bethe ansatz equations and rigged configurations},
J. Phys. A: Math. Theor. {\bf 47} (2014) 205207 (20pp) 

\bibitem[KBI]{KorepinBook}
V.~E.~Korepin, N.~M.~Bogoliubov and A.~G.~Izergin,
{\it Quantum inverse scattering method and correlation functions},
Cambridge University Press (1993).

\bibitem[NW]{NW}
R.~I.~Nepomechie and C.~Wang,
{\it Algebraic Bethe ansatz for singular solutions},
J. Phys. A: Math. Theor. {\bf 46} (2013) 325002 (8pp).

\bibitem[T]{T}
M.~Takahashi,
{\it One-dimensional Heisenberg model at finite temperature},
Prog. Theor. Phys. {\bf 46} (1971) 401--415.

\end{thebibliography}
\end{document}